\documentclass[8pt]{article}
\usepackage{amsmath}
\usepackage{amssymb,amsfonts}
\usepackage{amsthm}
\usepackage{thmtools, thm-restate}
\usepackage{mathrsfs}
\usepackage{mathtools}
\usepackage{comment}
\usepackage{bbm}
\usepackage{authblk}

\declaretheorem[{style=definition,numberwithin=section}]{definition}
\declaretheorem[{style=definition,sibling=definition}]{theorem}
\declaretheorem[{style=definition,sibling=definition}]{lemma}

\declaretheorem[{style=definition,sibling=definition}]{corollary}

\title{On the Minimum Decoding Delay of Balanced Complex Orthogonal Design}

%\title{More than one Author with different Affiliations}
\author[1]{Xiaodong Liu\thanks{Email:xdliufdu@gmail.com}}
\author[2]{Yuan Li\thanks{Email:yuanli@cs.uchicago.edu}}
\author[1]{Haibin Kan\thanks{Email:hbkan@fudan.edu.cn}}
%Shanghai Key Laboratory of Intelligent Information Processing and Institue of Computer Theory,
\affil[1]{Department of Computer Science, Fudan University}
\affil[2]{Department of Computer Science, University of Chicago}

%\author{Xiaodong Liu, Yuan Li and Haibin Kan}
\date{}

\begin{document}
\maketitle

\begin{abstract}
Complex orthogonal design (COD) with parameter $[p, n, k]$ is a combinatorial design used in space-time block codes (STBCs). For STBC, $n$ is the number of antennas, $k/p$ is the rate, and $p$ is the decoding delay.
A class of rate $1/2$ COD called balanced complex orthogonal design (BCOD) has been proposed by Adams \emph{et al}., and
they constructed BCODs with rate $k/p = 1/2$ and decoding delay $p = 2^m$ for $n=2m$. Furthermore,
they prove that the constructions have optimal decoding delay when $m$ is congruent to $1$, $2$, or
$3$ module $4$. They conjecture that for the case $m \equiv 0 \pmod 4$, $2^m$ is also
a lower bound of $p$. In this paper, we prove this conjecture.

%For the application in STBCs, $p$ corresponds to the decoding delay, $n$ is the number of antennas, and $k$
%is the number of variables transmitted.
\end{abstract}

\section{Introduction}
Since the pioneering work by Alamouti \cite{Ala98}, and the work by Tarokh \emph{et al}. \cite{TJC99}, \emph{complex orthogonal designs} (CODs) have become an effective technique for the design of \textit{space-time block codes} (STBCs). The importance of this class of codes is due to the fact that they achieve full diversity and have the fast maximum-likelihood decoding.

A COD $G[p, n, k]$ is a $p \times n$ matrix,
where each nonzero entry is either $\pm z_i$ or $\pm z^*_i$, $i = 1, 2, \ldots, k$ such that
$$
G^H G = I_n (|z_1|^2 + \ldots + |z_k|^2).
$$
For the application as STBCs, linear combination can be allowed, that is,
each entry is a complex linear combination of $z_1, \ldots, z_k$ and their conjugations, which is called \textit{generalized COD} sometimes.

Motivated by STBCs, we are interested in certain criterions of COD including \textit{rate} $k/p$, which is the ratio of number of variables transmitted and time units $p$; \textit{decoding delay}, which is the number of rows $p$; \textit{transceiver signal linearization}, which can be achieved if the variables in any row are either all conjugated or nonconjugated \cite{SBP04} etc.

It's impossible to optimize all the design considerations simultaneously for general $n$. For the rate, Liang \cite{Lia03} proves a tight
upper bound, which is slightly greater than $1/2$ depending on $n$. When the rate is maximized, Adams \emph{et al}. proves a lower
bound on the decoding delay, which grows factorially in $n$ \cite{AKM10, AKP07}. Furthermore, Kan and Li \cite{LK12} give a complete classification of  ``first type'' CODs, which are those without submatrices
$\left(
  \begin{array}{cc}
    \pm z_j & 0 \\
    0 & \pm z_j^* \\
  \end{array}
\right)$
and
$\left(
  \begin{array}{cc}
    \pm z_j^* & 0 \\
    0 & \pm z_j \\
  \end{array}
\right),$
which contains all the CODs with maximum rate.

The delay for maximum rate CODs grows quickly as the number of antennas increases. It might be possible to significantly
lower down the decoding delay at the cost of decreasing the rate a little bit. For this purpose, Adams \emph{et al}. construct a class of rate 1/2 CODs with decoding delay $p=2^m$ when $n=2m$, which are called \textit{balanced complex orthogonal designs} (BCODs). They also prove that $2^m$ is the lower bound of the decoding delay when $n \equiv 1, 2, 3 \pmod 4$; $2^{m-1}$ when $n \equiv 0 \pmod 4$. They conjecture $2^m$ is also a lower bound when $m \equiv 0 \pmod 4$.  In this paper, we prove the conjecture. Our proof is combinatorial. Although the presentation is self-contained, the concepts and proof techniques heavily depend on the techniques developed in those aforementioned papers.

We organize the paper as follows. In Section II, we introduce some definitions and notations. In Section III, we define and
study the properties of the \textit{standard form} of BCOD. In Section IV, we prove the tight lower bound for the delay of BCOD, depending on some properties of the standard form.

\section{Definitions and notations}
\begin{definition} \cite{TJC99} A \textit{complex orthogonal design} (COD) $G[p, n, k]$ is a $p \times n$ matrix,
where each nonzero entry is either $\pm z_i$ or $\pm z^*_i$, $i = 1, 2, \ldots, k$ such that
$$
G^H G = I_n (|z_1|^2 + \ldots + |z_k|^2),
$$
where $G^H$ denotes the \emph{Hermitian transpose} of $G$.
\end{definition}

\begin{definition} \cite{ADK11}
A COD $G[2k,n,k]$ with $n=2m$ columns is a \textit{balanced complex orthogonal design} (BCOD) if it satisfies the following conditions.
\footnote{In \cite{ADK11}, there is an extra condition that ``For each $j=1,2,\cdots,k$, $z_j$ and $z_j^*$ each appears $m$ times (up to sign)'', which can be implied from the definition of COD and 2, 3.}

\begin{enumerate}
%\item For each $j=1,2,\cdots,k$, $z_j$ and $z_j^*$ each appears $m$ times (up to sign);
\item Every row of $G$ has exactly $m$ zeros and $m$ nonzero entries;
\item $G$ is conjugation separated;
\item For each $j \in [k]$, the $M_j$ submatrix of the $B_j$ form submatrix is skew-symmetric, i.e., $M_j^T = -M_j$.
\end{enumerate}
\end{definition}

Given some COD $G[p, n, k]$, it's easy to see that row or column permutations, negating or
conjugating some variables, etc., will not change the orthogonality, which is formalized in the following definition.

\begin{definition} Following operations on COD are called \textit{equivalence operations}.
\begin{itemize}
\item Rearrange the order of the rows (``row permutation'').

\item Rearrange the order of the columns (``column permutation'').

\item Conjugate all instances of a certain variable (``instance conjugation'').

\item Negate all instances of a certain variable (``instance negation'').

%\item Change the index of all instances of a certain variable (``instance renaming'').

\item Multiply any row by $-1$ (``row negation'').

\item Multiply any column by $-1$ (``column negation'').
\end{itemize}
\end{definition}

Many results about COD comes from the following observation. The idea is that although COD is difficult to
understand in global, it has a very simple \emph{local} characterization.

\begin{definition} \cite{Lia03}
Fix some variable $z_j$. The following matrix is called \textit{$B_j$ form}:
$$
B_j = \left(
        \begin{array}{cc}
          z_j {I}_{n_1} & {M}_j \\
          -{M}_j^H & z_j^*  {I}_{n_2} \\
        \end{array}
      \right),
$$
where $n_1 + n_2 = n$, and $M_j$ is a $n_1 \times n_2$ matrix.
\end{definition}

It's easy to prove that $G[p, n, k]$ is a COD if each $z[j]$ appears in each column exactly once, and for each $j$,
the first $n$ rows is a $B_j$ form after equivalence operations. Based on this observation, Liang proves that for $n = 2m$
or $2m-1$, the rate $k/p$ is upper bounded by $(m+1)/(2m)$, and the bound is tight \cite{Lia03}. After that, Adams \emph{et al}.
solve the \emph{minimal delay} problem: what is the minimal $p$ when $k/p$ reaches the maximal?

\begin{theorem} \cite{AKM10, AKP07}
Let $n = 2m$ or $2m-1$. For COD $[p, n, k]$, if $k/p = (m+1)/(2m)$, then $p \ge \binom{2m}{m+1}$ when $n\equiv 0,1,3 \pmod 4$; $p \ge 2\binom{2m}{m+1}$ when $n\equiv 2 \pmod 4$.
\end{theorem}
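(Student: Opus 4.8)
The plan is to extract a combinatorial invariant from the zero/nonzero pattern of the rows and to show that orthogonality, combined with the tightness of Liang's rate bound, forces essentially every admissible pattern to be realized; the binomial coefficient then appears as a count of such patterns. First I would dispose of the odd case $n=2m-1$ by the same local $B_j$-form analysis used for $n=2m$ (the relevant supports live in $[2m-1]$ but the counting is unchanged up to the minor adjustment already implicit in Liang's treatment), and thereafter assume $n=2m$ with $k/p=(m+1)/(2m)$. Counting nonzero entries two ways, each column contains every variable exactly once, so $G$ carries $nk=2mk$ nonzero entries; dividing by $p=2mk/(m+1)$ shows the \emph{average} row has $m+1$ nonzeros. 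Using equality in Liang's bound together with the local structure (each variable's rows form a $B_j$ block after equivalence operations), I would argue that the support sizes are pinned down, so it suffices to track rows carrying exactly $m+1$ nonzero entries, i.e. exactly $m-1$ zeros.

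To each such row $r$ I attach its zero set $S_r\subseteq[2m]$ with $|S_r|=m-1$, equivalently its support of size $m+1$. The heart of the argument is to show that all $\binom{2m}{m-1}=\binom{2m}{m+1}$ possible supports are realized by at least one row, which immediately gives $p\ge\binom{2m}{m+1}$. Here I would exploit the off-diagonal relations $\sum_r \overline{G_{ra}}\,G_{rb}=0$ coming from $G^H G=I_n(|z_1|^2+\cdots+|z_k|^2)$: a cross monomial produced in row $r$ by columns $a$ and $b$ must be cancelled by a second row $r'$, and tracing these cancellations through the $B_j$-form pairings yields an \emph{exchange} property, namely that from any realized support one reaches supports differing by a single swap of one element in and out. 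Since the Johnson graph $J(2m,m+1)$ on $(m+1)$-subsets is connected, once one support occurs every support occurs.

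I expect the precise bookkeeping of these cancellations to be the main obstacle: one must show that the required partner row $r'$ always exists, that it has the claimed swapped support, and that distinct supports are never forced to collapse onto the same row. This is exactly the point where the purely local $B_j$ description has to be assembled into a global statement, and it is where the orthogonality polynomial identities must be read off monomial by monomial rather than through any single clean linear-algebra identity.

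Finally, for the extra factor of $2$ when $n=2m$ with $m$ odd (that is, $n\equiv 2\pmod 4$), I would attach to each row a $\pm 1$ orientation built from the signs and the conjugation pattern on its support, arranged so that its behaviour under the exchange moves above is governed by the skew-symmetry $M_j^T=-M_j$ of the $B_j$ blocks. Because an odd-dimensional skew-symmetric matrix is singular and carries a sign/parity obstruction, for $m$ odd this orientation cannot be made globally single-valued, so each support must be realized in both orientations, doubling the count to $2\binom{2m}{m+1}$; the odd-$n$ cases escape this because the corresponding block dimensions differ in parity. Verifying that the orientation invariant is well defined and genuinely obstructed precisely in the case $n\equiv 2\pmod 4$ is the second delicate step, and it is what separates that case from $n\equiv 0,1,3\pmod 4$.
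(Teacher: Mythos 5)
First, a point of reference: the paper does not prove this theorem at all; it is quoted as background from \cite{AKM10, AKP07}, and the surrounding text only records that the original proof works by showing that all zero patterns of a given weight must occur. Your plan follows that strategy in outline, but as written it is not a proof: the steps you yourself flag as ``delicate'' are exactly where the content of the theorem lives, and at least one step you treat as routine is wrong. Concretely, (i) the claim that the supports are ``pinned down'' at size $m+1$ does not follow from your averaging computation; you need the structural facts about maximum-rate CODs (each variable occurring once per column after normalization, and an a priori upper bound of $m+1$ on the number of nonzero entries in any row) before the average forces equality. Moreover, for $n=2m-1$ the average number of nonzeros per row is $(2m-1)(m+1)/(2m)<m+1$, so rows of two different support sizes genuinely occur; the bound there comes from realizing all supports of size $m$ and all of size $m+1$ inside $[2m-1]$, with the binomial coefficient appearing via $\binom{2m-1}{m}+\binom{2m-1}{m+1}=\binom{2m}{m+1}$, not as the number of $(m+1)$-subsets of $[2m]$. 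Dismissing $n=2m-1$ as a ``minor adjustment'' is therefore a genuine gap, not a simplification.

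(ii) The exchange property \emph{is} the theorem. The off-diagonal relation $\sum_r \overline{G_{ra}}G_{rb}=0$ only tells you that the cross term contributed by row $r$ in columns $a,b$ is cancelled by some other row carrying the same pair of variables in those columns; nothing in that identity alone forces the cancelling row's support to differ from $S_r$ by a single swap, and extracting the single-swap statement from the $B_j$-form structure is precisely the work done in the cited papers. Observing that the Johnson graph is connected is fine, but with no exchange lemma there is nothing to propagate. (iii) The factor of $2$ for $n\equiv 2\pmod 4$ rests on an undefined ``orientation'' whose failure to be globally single-valued for $m$ odd is exactly the assertion to be proved; the remark that odd-dimensional skew-symmetric matrices are singular is a suggestive analogy, not an argument, and you have neither exhibited the invariant, shown it is well defined on rows, nor shown it takes both values on every support. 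In short, the skeleton matches the known zero-pattern proof, but every load-bearing step is missing.
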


Their proof is based on a new concept called \textit{zero pattern}. Let $r$ be some row in COD $[p, n, k]$. The \textit{zero pattern}
of $r$ is a vector in $\mathbb{F}_2^n$, where the $i$th entry is $0$ if and only if $r(i)$ is $0$, where $r(i)$ denotes
the element of $r$ on the $i$th column. Roughly speaking, the lower bound of $p$ is proved by showing the existence of all zero
patterns with some given weight. In order to investigate BCOD, we propose the following definition analogous to zero pattern.

\begin{definition} Let $r$ be some row in BCOD $G[2k, 2m, k]$, and $\alpha \in \mathbb{F}_2^{2m}$ be the zero pattern of $r$. The \textit{left zero pattern} of $r$ is
$$
\alpha_L = (\alpha(1), \alpha(2), \ldots, \alpha(m)) \in \mathbb{F}_2^m,
$$
and the \textit{left weight} of $r$ is the weight of $\alpha_L$.
\end{definition}

Our idea of proving the lower bound is similar to Adams \emph{et al}.'s, that is, to show the existence of some left zero patterns.
For example, when $n = 2m$, $m$ odd, we will prove that all zero patterns exist, where the total is $2^m$.

%Following is another important concept in used by Adams \emph{et al}.

%\begin{definition} Let $G[p, n, k]$ bo COD. For any $j \in [k]$, say $G$ is \text{in $B_j$} if (the first $n$ rows of) $G$ %can be transformed into $B_j$-form  without column permutations.
%\end{definition}

\vspace{0.2cm}
Besides $B_i$ form, there is another local characterization of orthogonality. Let $G[p, n, k]$ be some COD.
Consider all $2 \times 2$ submatrix of $G$ such that the diagonal elements are $z[i]$. It's easy to see that there
are only 3 possible cases (up to negation and conjugation).
$$
A = \left(
      \begin{array}{cc}
        z_i & z_j \\
        -z_j^* & z_i^* \\
      \end{array}
    \right)
$$
$$
D = \left(
     \begin{array}{cc}
         z_i & 0 \\
         0 & z_i^* \\
     \end{array}
    \right)
$$
$$
T = \left(
     \begin{array}{cc}
         z_i & 0 \\
         0 & z_i \\
     \end{array}
    \right)
$$
Call submatrix in the form of $A$ an \textit{Alamouti $2 \times 2$}, $D$ \textit{Diagonal $2 \times 2$}, and $T$ \textit{Trivial $2 \times 2$}. From the definition of BCOD, for some fixed $z[j]$ in some row of $G[2k, 2m, k]$, it's contained in $m-1$ Alamouti $2 \times 2$,
one Diagonal $2 \times 2$, and $m-1$ Trivial $2 \times 2$.

We introduce the following concept \textit{complement row}, which is important for proving the lower bound on the delay of BCOD.

\begin{definition}
Let $G$ be a $[2k,2m,k]$ BCOD, and $r$ be a row of $G$. If row $r_c$ satisfies:
\begin{enumerate}
\item $r$ and $r_c$ have complementary zero patterns.
\item $r$ and $r_c$ have opposite conjugations.
\item $r_c$ contains the same variables as $r$.
\end{enumerate}
Then we call $r_c$ the \textit{complement} of $r$.
\end{definition}

From the definition of BCOD and $B_j$ form, it's easy to verify that every $r$ has a unique complement \cite{ADK11}.

\begin{definition} \cite{LK12}
COD $G[2k,2m,k]$ is called atomic if and only if there does not exist a COD that is a submatrix of $G$ consisting of some (not all) rows of $G$.
\end{definition}

For an atomic BCOD $G[2k,2m,k]$, given any $1\leqslant s, t \leqslant k$, there exist $j_1 = s, j_2, \cdots, j_{m-1}, j_m = t$ such that $B_{j_1}$ and $B_{j_2}$ share some common rows, $B_{j_2}$ and $B_{j_3}$ share some common rows, $\cdots$, $B_{j_{m-1}}$ and $B_{j_m}$ share some common rows. This condition is also sufficient for a BCOD to be atomic. Clearly, a BCOD which achieves the minimum decoding delay must be atomic. In the following sequel, we will assume that all BCODs are atomic.

\section{Standard form}

In this section, we define a \emph{standard form} for BCOD and prove some properties, which will be used
in the proof of the lower bound of delay in the next section.

\begin{definition} Let $G$ be a BCOD $[2k, 2m, k]$. We say $G$ is in \textit{standard form} if and only if
it's already in $B_i$ form for some $i \in [k]$.
\end{definition}

Recall that we say $G$ is \emph{in $B_j$ form} if $B_j$ is a submatrix of $G$ after equivalence operations
\emph{without} column permutations.

\begin{definition} A sequence of equivalence operations are called \textit{column-restricted equivalence operations}
if all the column permutations are transpositions of column $i$ and $m+i$, for some $i \in [m]$.
\end{definition}

\begin{theorem}
\label{thm:main}
If BCOD $G[2k, 2m, k]$ is already in standard form. Then for any $j \in [k]$, $G$ can be
transformed into $B_j$ form by column-restricted equivalence operations.
\end{theorem}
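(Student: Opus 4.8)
The plan is to isolate a single combinatorial invariant of $G$ and then transform along it. Write the first $2m$ rows of the standard form as
$$
R=\begin{pmatrix} z_i I_m & M_i\\ -M_i^H & z_i^* I_m\end{pmatrix}.
$$
Since every row of a BCOD has exactly $m$ nonzero entries and $M_i$ is skew-symmetric (so its diagonal vanishes), every off-diagonal entry of $M_i$ is nonzero. Let $\mathcal P=\{\{c,m+c\}:c\in[m]\}$ be the partition of the columns into \emph{conjugate pairs}; it is exactly the set of column pairs spanned by the Diagonal $2\times 2$'s of $z_i$, because the $2\times 2$ of $R$ on rows $\{c,m+c\}$ and columns $\{c,m+c\}$ equals $\mathrm{diag}(z_i,z_i^*)$. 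A product of the allowed transpositions $(c,\,m+c)$ is precisely a column permutation fixing every block of $\mathcal P$ setwise, so I first reduce the theorem to the statement that some column permutation carrying $G$ into $B_j$ form can be chosen to preserve $\mathcal P$.

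The key is to attach to each row an intrinsic matching. For a row $x$, its complement $x_c$ has complementary support, the same variables, and opposite conjugations; hence for a column $c\in\mathrm{supp}(x)$ carrying a variable $z_g$, the conjugate $z_g^*$ occupies a unique column $c''\in\mathrm{supp}(x_c)$, and the $2\times 2$ on rows $\{x,x_c\}$ and columns $\{c,c''\}$ is a Diagonal $2\times 2$ (its off-diagonal entries vanish because $c\notin\mathrm{supp}(x_c)$ and $c''\notin\mathrm{supp}(x)$). By uniqueness of the Diagonal $2\times 2$ through a given occurrence, this determines a perfect matching $\mu_x$ on the $2m$ columns with $\mu_x(c)=c''$, and every Diagonal $2\times 2$ of every variable arises as one block of some $\mu_x$. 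A direct computation in $R$ (tracking an entry $\pm M_i[a][b]$ at $(a,m+b)$ to its conjugate in the complement row $m+a$, which lies in column $b$) shows $\mu_x=\mathcal P$ for all $2m$ rows of $R$; because $M_s$ is skew-symmetric and every row still has $m$ nonzeros, the identical computation applies in any $B_s$ form and shows that $\mu_x$ takes a common value over all $2m$ rows of that form.

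Globalizing is the heart of the argument. Since $\mu_x$ is intrinsic to the row $x$ and does not depend on which form we view it in, the value $\mathcal P_s:=\mu_x$ is the same for every row $x$ of the $B_s$ submatrix, so $\mathcal P_s$ is a well-defined matching for each variable $z_s$, with $\mathcal P_i=\mathcal P$. If two forms $B_s$ and $B_t$ share a row $x$, then $\mathcal P_s=\mu_x=\mathcal P_t$; since $G$ is atomic, the $B$-forms are connected under sharing a row, and therefore all $\mathcal P_s$ coincide with $\mathcal P$. As every row of $G$ contains some variable and hence lies in some $B_s$ submatrix, this yields $\mu_x=\mathcal P$ for all rows, i.e.\ every Diagonal $2\times 2$ in $G$ spans a block of $\mathcal P$. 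I expect this globalization to be the main obstacle: the local computation inside a single form is mechanical, but the argument that a single conjugate pairing is forced on all rows hinges on the matching being genuinely form-independent and on connectivity, which is exactly where atomicity is used.

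Finally I read off the transformation. Some equivalence operations with a column permutation $\pi$ put $G$ into $B_j$ form; there the Diagonal $2\times 2$'s of $z_j$ span the blocks of $\mathcal P$, so by the previous step the pulled-back pairs $\{\pi^{-1}(c),\pi^{-1}(m+c)\}$ are again blocks of $\mathcal P$, i.e.\ $\pi$ preserves $\mathcal P$ and factors as $\pi=w\circ q$ with $w$ swapping only within blocks and $q$ permuting whole blocks. Applying $q$ simultaneously to the rows and columns of the resulting $B_j$ form gives another $B_j$ form (the off-diagonal block becomes a permutation-conjugate of $M_j$, still skew-symmetric); since row permutations are unrestricted, I absorb $q$ into the target and reach a $B_j$ form using only the within-block swaps $w$. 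These are column-restricted equivalence operations, which proves the theorem.
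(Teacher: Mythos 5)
Your proof is correct, but it is organized quite differently from the paper's. The paper argues locally and constructively: starting from the $B_1$ form, it picks a variable $z_j = M_1(s,t)$ \emph{adjacent} to $z_1$, counts the Trivial/Diagonal/Alamouti $2\times 2$'s through the occurrence at $(s,m+t)$ to determine exactly in which columns $z_j$ appears conjugated, concludes that the two swaps $(s,m+s)$ and $(t,m+t)$ already produce the $B_j$ form, and then walks along a chain of adjacent variables supplied by atomicity. You instead first prove what is essentially Corollary~\ref{cor:std_form}(3) as a global invariant --- the matching $\mu_x$ defined through complement rows, shown by the skew-symmetry computation to equal $\{\{c,m+c\}:c\in[m]\}$ on every row of each $B_s$ block and propagated to all of $G$ by the same atomicity/connectivity --- and only then derive the theorem by a correction argument: any column permutation $\pi$ achieving $B_j$ form must carry the standard pairing onto the Diagonal-$2\times 2$ pairing of $z_j$, hence preserves $\mathcal P$, and its block-permutation part can be absorbed by simultaneously permuting rows and replacing $M_j$ by a permutation conjugate (still skew-symmetric). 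So the logical order of theorem and corollary is reversed relative to the paper. Your route makes transparent that the conjugate-pair partition is the only obstruction and cleanly separates the local computation from the global step, at the cost of being non-constructive (you presuppose that \emph{some} unrestricted equivalence operations reach $B_j$ form, which is legitimate since the BCOD definition already posits the $B_j$-form submatrix) and of needing the form-independence of $\mu_x$; the paper's route hands you the explicit swaps at each step of the chain. Both arguments ultimately rest on the same two facts: the unique Diagonal $2\times 2$ through each occurrence lives on a row and its complement, and atomicity connects the $B$-forms through shared rows.
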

\begin{proof} Without loss of generality, assume $G$ is already in $B_1$ form, i.e., there exist $2m$ rows
of the form
$$
B_1 = \begin{pmatrix}
z_1 I_m & M_1 \\
-M_1^H & z_1^* I_m \\
\end{pmatrix},
$$
where $M_1$ is skew-symmetric with diagonal all zeros.

Recall that we always assume $G$ is atomic. It suffices to prove that $G$ can be transformed into $B_j$ form for the adjacent $j$, that is, for those $z[j]$
in $M_1$. By the definition of BCOD, $G$ is conjugation separated, that is, all variables in $M_1$
are of the form $\pm z_j$. Take any variable $\pm z_j \in M_1$, and assume that $M_1(s, t) = \pm z_j$, where $s, t \in [m]$.
Since $M_1$ is skew-symmetric, that is, $M_1 = -M_1^T$, we have $M_1(t, s) = \mp z_j$.

On row $s$, there are $m$ zeros. Among the $m$ zeros, for $z_j$, there are $m-1$ Trivial $2 \times 2$
$
\begin{pmatrix}
0 &  \pm z_j \\
\pm z_j & 0 \\
\end{pmatrix},
$
 and $1$ Diagonal $2 \times 2$
 $
\begin{pmatrix}
0 &  \pm z_j \\
\pm z^*_j & 0 \\
\end{pmatrix}.
$ It's easy to see
$$
B_1(s, m+s; t, m+t) =
\begin{pmatrix}
0 & \pm z_j \\
\pm z_j^* & 0 \\
\end{pmatrix},
$$
which is the only Diagonal $2 \times 2$.
Thus, for column $i \in  [m] \cup \{m+s\} \setminus \{s, t\}$, $z_j$ in $B_1 (s, m+t)$ shares a Trivial $2 \times 2$, and for column $i \in \{s, m+1, \cdots, 2m\} \setminus \{m+s, m+t\}$, $z_j$ shares an Alamouti form, which implies
that all the $z[j]$'s in column $i \in A = [m] \cup \{m+s, m+t\} \setminus \{s, t\}$ are of the form $\pm z_j$, and in column $i \in \bar{A} = [2m] \setminus A$ are of the form $\pm z_j^*$.

After swapping column $s$ and $m+s$, column $t$ and $m+t$, we could move all the $\pm z_j$ (without conjugation) into the first $m$ columns,
which is $B_j$ form. Since $z_j$ is an arbitrary variable in $M_1$, by repeating this argument, we will exhaust all the $j \in [k]$, and the proof is complete.
\end{proof}

The following corollary is immediate from the above theorem.

\begin{corollary}
\label{cor:std_form}
 If $G$ is in standard form, then
\begin{itemize}
\item [(1)]
 For any $i = 1, 2, \ldots, m$, the $i$th column and $(m+i)$th column of $G$ have complement zero
patterns.
\item [(2)] For any $j \in [k]$, and any $i \in [m]$, the conjugations of $z[j]$ in column $i$ and column $m+i$
are different.
\item [(3)] For any $j \in [k]$, and any $i \in [m]$, $\pm z_j$ in column $i$, and $\pm z_j^*$ in column
$m+i$ (or $\pm z_j^*$ in column $i$, $\pm z_j$ in column $m+i$) form a Diagonal $2 \times 2$.
\end{itemize}
\end{corollary}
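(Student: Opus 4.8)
The plan is to derive all three items directly from Theorem~\ref{thm:main} by reading off the explicit structure of the $B_j$ form, supplemented by one short global counting argument for item~(1). The observation that makes this legitimate is that column-restricted equivalence operations only ever transpose a column $i$ with its partner $m+i$, so the partition of the columns into the pairs $\{i,m+i\}$ is left invariant; moreover row permutations, negations and instance conjugations neither create nor destroy zeros and they send a Diagonal $2\times2$ to a Diagonal $2\times2$. Hence any property phrased symmetrically in the two members of a pair $\{i,m+i\}$ holds for $G$ if and only if it holds after transforming $G$ into $B_j$ form, and I may freely pass between the two.

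For items~(2) and~(3), I would fix $j\in[k]$ and $i\in[m]$ and invoke Theorem~\ref{thm:main} to bring $G$ into $B_j$ form. There the block $z_j I_m$ places a single nonconjugated $z_j$ in each of columns $1,\dots,m$ while $z_j^*I_m$ places a single conjugated $z_j^*$ in each of columns $m+1,\dots,2m$; since each variable occurs exactly once per column, these are the unique occurrences of the $j$th variable in those columns, giving~(2). For~(3), the occurrence of $z_j$ sits at position $(i,i)$ and that of $z_j^*$ at $(m+i,m+i)$, and because $M_j$ is skew-symmetric its diagonal vanishes, so the entries at $(i,m+i)$ and $(m+i,i)$ are zero; the $2\times2$ submatrix on rows $\{i,m+i\}$ and columns $\{i,m+i\}$ is therefore $\left(\begin{smallmatrix} z_j & 0\\ 0 & z_j^*\end{smallmatrix}\right)$, a Diagonal $2\times2$. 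Transporting these two statements back through the pair-preserving, zero-preserving operations yields~(2) and~(3) for the original $G$.

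For item~(1) I would argue globally rather than inside the $2m\times2m$ block, because the columns of $G$ span all $2k$ rows while the $B_j$ form only exhibits $2m$ of them. By~(3), for every variable its two occurrences in the pair $\{i,m+i\}$ lie in two distinct rows and form a Diagonal $2\times2$; in particular the row carrying the occurrence in column $i$ is zero in column $m+i$, and conversely. Hence no row of $G$ has both column $i$ and column $m+i$ nonzero. Since each of columns $i$ and $m+i$ contains exactly $k$ nonzero entries (one per variable) distributed among the $2k$ rows, and at most one of the two columns is nonzero in any given row, exactly one of them must be nonzero in every row; that is, columns $i$ and $m+i$ have complementary zero patterns, which is~(1). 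I expect the only place needing care — the main, and rather minor, obstacle — to be precisely this passage from the local $B_j$ block to the full matrix: the Diagonal $2\times2$ read off from $B_j$ form must be combined with the count over all $2k$ rows, rather than concluded from the block alone.
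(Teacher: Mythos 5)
Your proof is correct. For items (2) and (3) you take essentially the paper's route: invoke Theorem~\ref{thm:main} to bring $G$ into $B_j$ form, read the claims off the block (the identity blocks $z_jI_m$ and $z_j^*I_m$ give (2); the vanishing diagonal of the skew-symmetric $M_j$ gives the Diagonal $2\times 2$ in (3)), and observe that column-restricted operations preserve the pairing $\{i,m+i\}$, the zero pattern, and the Diagonal shape. For item (1) you genuinely diverge. The paper's one-line proof asserts that (1) is ``true for each $B_j$ form'' and invariant under column-restricted operations; unpacked, this means that inside the $2m\times 2m$ block the columns $i$ and $m+i$ are complementary (which uses the fact that each row has exactly $m$ nonzeros, so every off-diagonal entry of $M_j$ is nonzero), and, implicitly, that every row of $G$ lies in some $B_j$ block so that ranging over $j$ covers all $2k$ rows. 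You instead deduce (1) from (3): the Diagonal $2\times 2$ forces each row to vanish in at least one of columns $i$ and $m+i$, and since each column carries exactly $k$ nonzero entries, the $2k$ nonzeros of the pair must occupy the $2k$ rows with exactly one per row, giving complementarity. Your argument is slightly longer but makes explicit the local-to-global step (from the $2m$ rows of one $B_j$ block to all $2k$ rows of $G$) that the paper's proof glosses over; the paper's argument is terser but relies on the unstated covering of all rows by the $B_j$ blocks. Both are sound.
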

\begin{proof} It's clear that both (1), (2), (3) are true for each $B_j$ form. Since $G$ can be transformed into $B_j$
by column-restricted operations, and (1), (2), (3) are \emph{invariant} under column-restricted operations, we conclude that
$G$ satisfies (1), (2) and (3).
\end{proof}

\section{Minimal decoding delay}

The following lemma is the key result for proving the delay lower bound, which says that the existence of left zero
pattern $\alpha$ implies the existence of zero pattern $\beta$, where $\beta$ is obtained by changing two arbitrary bits of
$\alpha$ from $0$ to $1$.
\begin{lemma}
\label{lem:induce}
Let $G$ be a $[2k, 2m, k]$ BCOD in the standard form. Let $r$ be one row in $G$ with left zero pattern $\alpha \in \mathbb{F}_2^{m}$ and left weight $0 \le u \le m-2$. Then for any distinct $i, j \in [m]$ such that $\alpha(i) = \alpha(j) = 0$, there exists some
row in $G$ with left zero pattern $\alpha \oplus e_i \oplus e_j$, and the same conjugation, and thus has left weight $u+2$.
\end{lemma}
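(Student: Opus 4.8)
The plan is to pass to the $B_a$ form of a suitably chosen variable $z_a$, in which the desired row will appear as an explicit row of the $B_a$ block, and then to translate left zero patterns back to the standard form. To set up, assume without loss of generality that $G$ is in $B_1$ form. Since $\alpha(i)=0$, Corollary~\ref{cor:std_form}(1) forces the $(m+i)$-th entry of $r$ to be nonzero; let $z_a$ be the variable occurring in that entry, and let $c$ denote the common conjugation of $r$, which is well defined because $r$ is conjugation separated.

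First I would record how left zero patterns transform under column-restricted operations. By Theorem~\ref{thm:main}, $G$ is carried into $B_a$ form by transposing columns $i'$ and $m+i'$ for $i'$ ranging over some set $S_a\subseteq[m]$. By Corollary~\ref{cor:std_form}(1), columns $i'$ and $m+i'$ are complementary in every row, so each such transposition flips precisely the $i'$-th bit of every row's left zero pattern. Consequently the two coordinate systems are related, uniformly over all rows, by XOR with the indicator vector $\mathbf{1}_{S_a}\in\mathbb{F}_2^m$ of $S_a$; I will call the left zero pattern computed in the $B_a$ coordinates the \emph{$B_a$-pattern}.

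The core step is to locate both $r$ and the target inside the $B_a$ block. Because $z_a$ occupies each column exactly once, the rows of $G$ containing $z_a$ are exactly the $2m$ rows of the $B_a$ form. Reading off the diagonal blocks $z_a I_m$ and $z_a^* I_m$, the $s$-th top row is nonconjugated with $B_a$-pattern $e_s$, while the $s$-th bottom row is conjugated with $B_a$-pattern $\mathbf{1}\oplus e_s$, where $\mathbf{1}$ is the all-ones vector (the second claim uses that every BCOD row has exactly $m$ nonzero entries, which forces every off-diagonal entry of the skew-symmetric $M_a$ to be nonzero). Since nonconjugated $z_a$ can sit only on the first-half diagonal and conjugated $z_a$ only on the second-half diagonal, the occurrence of $z_a$ in column $m+i$ of $r$ pins $r$ down: it is the top row with $B_a$-pattern $e_i$ (if $c$ is nonconjugated, whence $i\in S_a$) or the bottom row with $B_a$-pattern $\mathbf{1}\oplus e_i$ (if $c$ is conjugated, whence $i\notin S_a$). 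I then take the target to be the row of the same block whose $B_a$-pattern is $e_j$, respectively $\mathbf{1}\oplus e_j$: it exists, it carries the same conjugation $c$, and its $B_a$-pattern differs from that of $r$ by exactly $e_i\oplus e_j$. Applying $\oplus\,\mathbf{1}_{S_a}$ to return to the standard form preserves this difference, so the target row has left zero pattern $\alpha\oplus e_i\oplus e_j$; the hypotheses $\alpha(i)=\alpha(j)=0$ and $u\le m-2$ ensure these two bits genuinely flip from $0$ to $1$, giving left weight $u+2$.

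I expect the main obstacle to be the bookkeeping of the previous paragraph: proving that a single transposition flips one left-pattern bit uniformly across all rows, and then pinning $r$, together with its conjugation, to one specific row of the $B_a$ block. This is also where flipping \emph{two} bits becomes essential. A single flip would land on a $B_a$-pattern of the form $\mathbf{0}$ or $\mathbf{1}$, which need not occur among the rows of the $B_a$ block, whereas flipping two bits keeps us among the top rows (or among the bottom rows) of the same block, where the required row is guaranteed to exist and to carry the same conjugation $c$.
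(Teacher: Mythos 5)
Your proof is correct, and the witness row it produces is in fact the same row the paper exhibits, but you reach it by a genuinely different route. The paper never leaves the standard form: it takes the variable at $r(m+i)$ (its $z_1$, your $z_a$), uses the count of $m-1$ Trivial, $m-1$ Alamouti and one Diagonal $2\times2$'s through that instance to determine column by column where $\pm z_1$ and $\pm z_1^*$ sit, and then verifies the zero pattern of the row $r'$ with $r'(j)=\pm z_1$ entry by entry via one $2\times2$ submatrix per column. You instead invoke Theorem~\ref{thm:main} to change coordinates into the $B_a$ form, where the full zero pattern of every row through $z_a$ is read off at once from $\left(\begin{smallmatrix} z_a I_m & M_a\\ -M_a^H & z_a^* I_m\end{smallmatrix}\right)$ together with the observation that $M_a$ has no off-diagonal zeros, and then pull back by XOR with $\mathbf{1}_{S_a}$. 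What this buys is that the witness's entire left zero pattern comes for free rather than being checked coordinate by coordinate; what it costs is the bookkeeping, and that is where the two soft spots of your write-up sit. First, the uniform bit-flip claim should note that complementarity of columns $i'$ and $m+i'$ persists at the intermediate (possibly non-standard-form) stages of the column-restricted sequence; this is immediate since transposing a complementary pair keeps it complementary and leaves other pairs untouched. Second, your case split ``$c$ nonconjugated $\Rightarrow$ $r$ is a top row of $B_a$'' and the transport of ``same conjugation'' through the $B_a$ form tacitly assume the transformation of Theorem~\ref{thm:main} applies no instance conjugation to $z_a$; this is true of the transformation actually constructed there, but the cleaner patch is to observe directly in standard coordinates that your target row contains the instance of $z_a$ whose conjugation matches that of $r(m+i)$, so conjugation separation hands it the conjugation $c$ without any reference to the coordinate change. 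Neither point is a real gap.
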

\begin{proof} Denote by $I_1$ the support of $\alpha$, that is,
$$
I_1 = \{ i \in [m] : \alpha(i) = 1 \},
$$
and $I_0 = [m] \setminus I_1$. Let $i, j \in I_0$ be distinct. We will show that there exists some row with
left zero pattern $\alpha \oplus e_i \oplus e_j$.

Suppose row $r$ has left zero pattern $\alpha$. By (1) of Corollary \ref{cor:std_form}, $r$ has right zero pattern $\overline{\alpha}$.
Without loss of generality, assume $r(m+i) = z_1$. By (3) of Corollary \ref{cor:std_form}, we claim that there exists
$\pm z_1^*$ in column $i$. Because $z_1$ in row $r$ is contained in $m-1$ Trivial $2 \times 2$, $m-1$ Alamouti $2 \times 2$,
and one Diagonal $2 \times 2$, for all the $l \in I_0\setminus \{i \}$, column $l$ contains variable $\pm z_1$ (not its conjugation), and for all the $l \in I_1$, column $l$ contains variable $\pm z_1^*$.

Take the row $r'$ such that $\pm z_1$ on the $j$th column, i.e., $r'(j) = \pm z_1$. We claim the left zero pattern of
$r'$ is exactly what we need. We need to verify that
\begin{itemize}
\item[(1)] $r'(i)$ is not zero.
\item[(2)] For all $l \in I_1$, $r'(l)$ it not zero.
\item[(3)] For all $l \in I_0 \setminus \{i, j\}$, $r'(l) = 0$.
\end{itemize}

For (1), notice that $r'(j) = \pm z_1$, and there exists some row $r_c$ such that $r_c(i) = \pm z_1^*$. Considering
the $2 \times 2$ submatrix formed by the $i$th and $j$th column of row $r'$ and $r_c$, they are either Alamouti $2\times 2$
or Diagonal $2 \times 2$. Again by (3)  in Corollary \ref{cor:std_form}, they must be Alamouti $2 \times 2$, which implies
$r'(i) \not= 0$.

For (2), take any $l \in I_1$. We know that there exist some row $r_l$ such that $r_l(l) = \pm z_1^*$. Considering
the $2 \times 2$ submatrix formed by the $l$th and $j$th column of row $r'$ and $r_l$, they are either Alamouti $2\times 2$
or Diagonal $2 \times 2$. By (3)  in Corollary \ref{cor:std_form}, they must be Alamouti $2 \times 2$, which implies
$r'(l) \not= 0$.

For (3), take any $l \in I_0 \setminus \{i, j\}$. There exist some row $r_l$ such that $r_l(l) = \pm z_1$. Considering
the $2 \times 2$ submatrix formed by the $l$th and $j$th column of row $r'$ and $r_l$, it must be Trivial $2\times 2$, which implies $r'(l) = 0$.
\end{proof}

Now, we are ready to prove the lower bound on the decoding delay of BCOD for $n = 2m$, $m$ odd. For the case
$m \equiv 1, 2, 3 \pmod 4$, the lower bound $2^m$ is proved in \cite{ADK11}. For the case $m \equiv 0 \pmod 4$, they
are able to prove lower bound $2^{m-1}$.

\begin{theorem}
\label{thm:delay_odd}
Let $G$ be a $[2k, 2m, k]$ BCOD with $m$ odd. Then $2k \ge 2^m$.
\end{theorem}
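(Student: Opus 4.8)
The plan is to prove that every one of the $2^m$ vectors in $\mathbb{F}_2^m$ occurs as the left zero pattern of some row of $G$; since a row determines its left zero pattern, distinct left zero patterns force distinct rows, and having all $2^m$ of them immediately gives $2k \ge 2^m$. First I would put $G$ into standard form, which is harmless for the bound because equivalence operations do not change $k$, and every (atomic) BCOD can be brought into $B_1$ form by such operations. Working in this form lets me apply both Corollary \ref{cor:std_form} and Lemma \ref{lem:induce}, and it also hands me an explicit supply of low-weight rows: the $m$ top rows of the $B_1$ submatrix have left parts $z_1 e_1, \ldots, z_1 e_m$, so their left zero patterns are exactly $e_1, \ldots, e_m$, realizing \emph{all} weight-$1$ patterns.

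The core is then an induction on odd left weight, showing that every $\alpha \in \mathbb{F}_2^m$ of odd weight is realized. The base case $u=1$ is the observation above. For the inductive step, given odd $u$ with $3 \le u \le m$ and an arbitrary $\alpha$ of weight $u$, I would pick two positions $i,j$ in its support and set $\alpha' = \alpha \oplus e_i \oplus e_j$; then $\alpha'$ has odd weight $u-2 \le m-2$, so by the induction hypothesis some row realizes $\alpha'$, and since $\alpha'(i)=\alpha'(j)=0$, Lemma \ref{lem:induce} applied at positions $i,j$ produces a row with left zero pattern $\alpha' \oplus e_i \oplus e_j = \alpha$. Crucially, because $m$ is odd the attainable odd weights are exactly $1,3,\ldots,m$, so this yields all $2^{m-1}$ odd-weight patterns.

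To obtain the remaining even-weight patterns I would invoke the complement row: every row $r$ has a unique complement $r_c \in G$ whose full zero pattern is complementary to that of $r$, hence whose left zero pattern is $\overline{\alpha_L}$. Thus the family of realized left patterns is closed under complementation, and since $m$ is odd, $\alpha \mapsto \overline{\alpha}$ is a weight-parity-reversing bijection of $\mathbb{F}_2^m$, carrying the $2^{m-1}$ odd-weight patterns onto all $2^{m-1}$ even-weight patterns. Combining the two families shows that all $2^m$ left zero patterns are realized, and therefore $G$ has at least $2^m$ distinct rows, i.e.\ $2k \ge 2^m$.

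The delicate point is the interplay of the two weight-changing mechanisms with the parity of $m$. Lemma \ref{lem:induce} only raises the left weight by two and can only be seeded by the weight-$1$ rows that the standard form provides, so by itself it reaches odd weights only; it is precisely the oddness of $m$ that lets the complement operation flip parity and furnish the missing even weights. I therefore expect the steps most in need of care to be justifying the reduction to standard form (so that the weight-$1$ rows are genuinely available and the Corollary applies) and verifying that complementation acts on left patterns exactly as claimed, rather than the induction itself, which is short and routine.
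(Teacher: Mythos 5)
Your proposal is correct and follows essentially the same route as the paper: realize all weight-$1$ left zero patterns from the $B_1$ standard form, climb through the odd weights $3, 5, \ldots, m$ via Lemma \ref{lem:induce}, and use complement rows to obtain the even-weight patterns, which works precisely because $m$ is odd. Your write-up is in fact slightly more careful than the paper's at the final step, since you note that the complement row realizes the \emph{complemented pattern} $\overline{\alpha_L}$ (not merely a row of weight $m-u$), which is what the counting argument actually requires.
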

\begin{proof} It suffices to prove that every possible left zero pattern exists. Since a left zero pattern is a vector
in $\mathbf{F}_2^m$, there are $2^m$ in total.

Assume $G$ is already in its standard form $B_1$. From $B_1$ form, we claim that all left zero
patterns with weight $1$ exist. By Lemma \ref{lem:induce}, all left zero patterns with weight $3$ exist since each time
we can transform any two zero entries to ones. Repeating this argument, all left zero patterns of the same conjugation with weight $1, 3, 5, \ldots, m$ exist. Notice that if one row has left weight $u$, then its complement has weight $m-u$. From this observation, we conclude
all left zero patterns exists, which completes the proof.
\end{proof}

For the case $n = 2m$, $m$ even, the proof is similar, except that we need to take the conjugation into account. And
the following theorem proves the conjecture that, for $n = 2m$ congruent to $0$ module $8$, the delay $2k$ is
lower bounded by $2^m$.

\begin{theorem} Let $G$ be a $[2k, 2m, k]$ BCOD with $m$ even. Then $2k \ge 2^m$.
\end{theorem}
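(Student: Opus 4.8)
The plan is to mirror the proof of Theorem~\ref{thm:delay_odd}, but with a crucial twist forced by parity. Since Lemma~\ref{lem:induce} changes the left weight by exactly $2$ and preserves conjugation, starting from the weight-$1$ rows of the $B_1$ form we can only climb to left weights of the same parity, namely the odd weights $1, 3, \ldots, m-1$ (recall $m$ is even). Thus, in contrast with the odd case where odd weights under one conjugation together with even weights under the other exactly tile $\mathbb{F}_2^m$, here a single conjugation class realizes only the $2^{m-1}$ odd-weight left patterns, not all $2^m$ patterns. To recover the missing factor of two, the idea is to realize every odd-weight left pattern \emph{twice}, once by an unconjugated row and once by a conjugated row, and then argue that these two families are disjoint so that they contribute $2^{m-1}+2^{m-1}=2^m$ distinct rows.

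First I would place $G$ in standard form $B_1$. The top $m$ rows of $B_1$ have left zero patterns $e_1, \ldots, e_m$, all of left weight $1$, and all unconjugated, since the leading variable $z_1$ there appears without conjugation. Applying Lemma~\ref{lem:induce} repeatedly, and using that any odd-weight support can be built from a singleton by adjoining two new coordinates at a time, I would show by induction on the weight that every odd-weight left pattern is realized by some unconjugated row. This produces $2^{m-1}$ unconjugated rows with pairwise distinct left patterns.

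Next I would invoke the complement. Every row $r$ has a unique complement $r_c$ whose zero pattern is the bitwise complement of that of $r$ and whose conjugation is opposite. Hence the complement of an unconjugated row of left weight $u$ is a conjugated row of left weight $m-u$; because $m$ is even and $u$ is odd, $m-u$ is again odd. As $r$ ranges over the $2^{m-1}$ unconjugated odd-weight rows found above, complementation, which is a bijection on odd-weight patterns precisely when $m$ is even, makes $r_c$ range over conjugated rows realizing all $2^{m-1}$ odd-weight left patterns.

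Finally I would count rows rather than patterns. The unconjugated family and the conjugated family each contain $2^{m-1}$ rows with distinct left patterns, and the two families are disjoint because a fully conjugated row can never coincide with a fully unconjugated row. Therefore $G$ has at least $2^{m-1}+2^{m-1}=2^m$ rows, that is, $2k \ge 2^m$. The main obstacle to articulate is exactly this parity obstruction from Lemma~\ref{lem:induce}: for $m$ even no single conjugation class can cover all of $\mathbb{F}_2^m$, so the argument rests entirely on two points that fail for general $m$ but hold here, namely that $m$ even keeps complementation within the odd-weight patterns, and that the disjointness of the two conjugation classes upgrades $2^{m-1}$ realized patterns into $2^m$ distinct rows. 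Checking that the inductive climb via Lemma~\ref{lem:induce} reaches every odd-weight pattern, and that its hypothesis $u \le m-2$ is never violated since it is only applied at weights $\le m-3$, is the routine part.
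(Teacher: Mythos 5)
Your proof is correct and follows essentially the same route as the paper: use Lemma~\ref{lem:induce} to realize all $2^{m-1}$ odd-weight left zero patterns within one conjugation class, then use complementation (which preserves odd weight because $m$ is even, and flips conjugation) to double the count to $2^m$ rows. If anything you are slightly more careful than the paper, which loosely asserts the ``two rows per pattern'' claim for every $\alpha \in \mathbb{F}_2^m$ rather than only for the odd-weight patterns the argument actually produces.
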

\begin{proof}
By the same argument as we did in Theorem \ref{thm:delay_odd}, we claim that all the left zero patterns with weight
$1, 3, \ldots, m-1$ exists, where the amount is
$$
{m \choose 1} + {m \choose 3} + \ldots + {m \choose m-1} = 2^{m-1}.
$$
By the definition of BCOD, $G$ is conjugation separated. Observe that any row has different conjugation with its
complement. Therefore, we claim that for every $\alpha \in \mathbb{F}^m$, there exists at least two rows $r, r'$ with
left zero pattern $\alpha$ and different conjugations, which implies that the number of rows $2k \ge 2^m$.
\end{proof}

We would like to point out that, in \cite{ADK11}, the lower bound for $\nu(n)$ is proved by reducing BCOD $[2k, 2m, k]$
to \textit{Real Orthogonal Design} (ROD) with parameter $[2k, 2m, 2k]$. It's known that for ROD $[2k, 2m, 2k]$,
the delay $2k$ is lower bounded by $\nu(n)$ \cite{Hur98, Hur23, Rad22}, where $\nu(n) = 2^{\delta(n)}$ and
$$
\delta(n) = \begin{cases}
4t, & \text{if }n = 8t+1 \\
4t+1, & \text{if }n = 8t+2 \\
4t+2, & \text{if }n= 8t+3, 8t+4 \\
4t+3, & \text{if }n = 8t+5, 8t+6, 8t+7, 8t+8.
\end{cases}
$$
Our proof here is self-contained and combinatorial.

\textbf{Xiaodong Liu} received his B.S. degree in School of Mathematical Science from Fudan University, 2012.
Now, he is pursuing his Ph.D. degree in Computer Science in Fudan University.

\vspace{0.2cm}
\textbf{Yuan Li} received his B.S. degree in Computer Science from Fudan University, 2011.
Now, he is pursuing his Ph.D. degree in Computer Science in the University of Chicago.

\vspace{0.2cm}
\textbf{Haibin Kan} received the Ph.D. degree from Fudan University, Shanghai, China, 1999. After receiving the
Ph.D. degree, he became a faculty of Fudan University. From June 2002 to January 2006,
he was with the Japan Advanced Institute of Science and Technology as an assistant professor.
He went back Fudan University in February 2006, where he is currently a full professor. His
research topics include coding theory, complexity of computing, and information security.

\end{document}